\documentclass[conference]{IEEEtran}%
\usepackage{amsfonts}
\usepackage{amsmath}
\usepackage{amssymb}
\usepackage{graphicx}
\usepackage{cite}%
\setcounter{MaxMatrixCols}{30}
\newtheorem{theorem}{Theorem}

\newtheorem{corollary}[theorem]{Corollary}

\newtheorem{definition}[theorem]{Definition}

\begin{document}

\title{Utility and Privacy of Data Sources: Can Shannon Help Conceal and Reveal Information?}
\pubid{~}
\specialpapernotice{~}%

\author{\authorblockN{Lalitha Sankar\authorrefmark{1}%
, S. Raj Rajagopalan \authorrefmark{2}, H. Vincent Poor\authorrefmark{1}}
\authorblockA{\authorrefmark{1}Dept. of Electrical Engineering,
Princeton University,
Princeton, NJ 08544.
{lalitha,poor}@princeton.edu\\}
\authorblockA{\authorrefmark{2}HP Labs,
Princeton, NJ 08544.
raj.raj@hp.com\\}}%
%

\maketitle
%

\begin{abstract}%

\footnotetext{This research is supported in part by the National\ Science
Foundation under Grants CNS-09-05086 and CNS-09-05398.}The problem of private
information \textquotedblleft leakage\textquotedblright\ (inadvertently or by
malicious design) from the myriad large centralized searchable data
repositories drives the need for an analytical framework that quantifies
unequivocally how safe private data can be (privacy) while still providing
useful benefit (utility) to multiple legitimate information consumers. Rate
distortion theory is shown to be a natural choice to develop such a framework
which includes the following: modeling of data sources, developing application
independent utility and privacy metrics, quantifying utility-privacy tradeoffs
irrespective of the type of data sources or the methods of providing privacy,
developing a side-information model for dealing with questions of external
knowledge, and studying a \textit{successive disclosure} problem for multiple
query data sources.%

\end{abstract}%

\section{Introduction}

Information technology and electronic communications have been rapidly applied
to almost every sphere of human activity, including commerce, medicine and
social networking. The concomitant emergence of myriad large centralized
searchable data repositories has made \textquotedblleft
leakage\textquotedblright\ of private information such as medical data, credit
card information, or social security numbers via data correlation
(inadvertently or by malicious design) an important and urgent societal
problem. Unlike the well-studied secrecy problem (e.g.,
\cite{RSA,CsisNar1,Wyner}) in which the protocols or primitives make a sharp
distinction between secret and non-secret data, in the \emph{privacy} problem,
disclosing data provides informational utility while enabling possible loss of
privacy at the same time. In fact, in the course of a legitimate transaction,
a user can learn some public information, which is allowed and needs to be
supported, and at the same time also learn/infer private information, which
needs to be prevented. Thus every user is (potentially)\ also an adversary.
This drives the need for a unified analytical framework that can tell us
unequivocally and precisely how safe private data can be (privacy) while still
providing useful benefit (utility) to multiple legitimate information consumers.

It has been noted that utility and privacy are competing goals:
\textit{perfect privacy can be achieved by publishing nothing at all, but this
has no utility; perfect utility can be obtained by publishing the data exactly
as received, but this offers no privacy }\cite{Chawla01}. Utility of a data
source is potentially (but not necessarily) degraded when it is restricted or
modified to uphold privacy requirements. The central problem of this paper is
a precise quantification of the tradeoff between the privacy needs of the
\textit{respondents} (individuals represented by the data)\ and the utility of
the \textit{sanitized} (published) data for any data source.

Though the problem of privacy and information leakage has been studied for
several decades by multiple research communities (e.g.,
\cite{Adam_Wort,Dalenius,Sweeney,Ag_Ag,Chawla01} and the references therein),
the proposed solutions have been both heuristic and application-specific. The
recent groundbreaking theory of differential privacy
\cite{Dwork_DP,Dwork_DP_Survey} from the theoretical computer science
community is the first universal model that applies to any statistical
database irrespective of application or content. However, the crucial
challenges of an analytic characterization of both a utility metric and the
privacy-utility tradeoff remains unaddressed. We seek to address these
challenges using tools and techniques from information theory.

Rate distortion theory is a natural choice to study the utility-privacy
tradeoff; utility can be quantified via fidelity which in turn is related to
\textit{distortion} and privacy can be quantified via \textit{equivocation}.
Our key insight is captured in the following theorem we present in this paper:
for a data source with private and public data and desired utility level,
maximum privacy for the private data is achieved by \textit{minimizing the
information disclosure rate }sufficient to satisfy the desired utility for the
public data. To the best of our knowledge this is the first observation that
tightly relates utility and privacy. I

In a sparsely referenced paper \cite{Yamamoto} from three decades ago,
Yamamoto developed the tradeoff between rate, distortion, and equivocation for
a specific and simple source model. In this paper, we show via the above
summarized theorem that Yamamoto's formalism can be translated into the
language of data disclosure. Furthermore, we develop a framework which allows
us to model data sources, specifically databases, develop application
independent utility and privacy metrics, quantify the fundamental bounds on
the utility-privacy tradeoffs, and develop a side-information model for
dealing with questions of external knowledge, and study the utility-privacy
tradeoffs for multiple query data sources as a \textit{successive disclosure
}problem. The final problem arises in the following context: real-world data
sources are in general interactive, that is, they allow users multiple
interactions (queries). However, modeling this analytically is particularly
challenging. Our framework can handle the non-interactive (single query) case
for any given utility and privacy requirements. In this paper, we study the
interactive case as a \textit{successive disclosure problem} modeled along the
information-theoretic successive refinement problem.

\section{The Database Privacy Problem}

\subsection{Problem Definition}

While the problem of quantifying the utility/privacy problem applies to all
types of data sources, we start our study with databases because they are
highly structured and historically better studied than other types of sources.
A database is a table (matrix) whose rows represent the individual entries and
whose columns represent the \emph{attributes} of each entry \cite{Adam_Wort}.
For example, the attributes of each entry in a healthcare database typically
include name, address, social security number (SSN), gender, and a collection
of medical information and each entry contains the information pertaining to
an individual. Messages from a \emph{user} to a database are called
\emph{queries} and, in general, result in some numeric or non-numeric
information from the database termed the \emph{response}.

The goal of privacy protection is to ensure that, to the extent possible, the
user's knowledge is not increased beyond strict predefined limits by
interacting with the database. The goal of utility provision is, generally, to
maximize the amount of information that the user can receive. Depending on the
relationship between attributes, and the distribution of the actual data, a
response may contain information that can be inferred beyond what is
explicitly included in the response. The privacy policy defines the
information that should not revealed explicitly or by inference to the user
and depends on the context and the application. For example, in a database on
health statistics, attributes such as name and SSN may be considered private
data, whereas in a state motor vehicles\ database only the SSN\ is considered
private. The challenge for privacy protection is to design databases such that
any response does not reveal information contravening the privacy policy.

\subsection{Current Approaches and Metrics}

The approaches considered in the literature have centered on
\emph{perturbation} (also called \textit{sanitization}) which encompasses a
general class of database modification techniques that ensure that a user only
interacts with a modified database that is derived from the original (e.g.:
\cite{Dalenius,Sweeney,Ag_Ag,Chawla01}). Most of the current
perturbation-based approaches are heuristic and application-specific and often
focus on additive noise approaches.

Perturbation techniques depend on whether the database is considered
\textit{interactive} (i.e. whether the user can issue more queries after
seeing earlier responses) or \textit{non-interactive} \cite{Dwork_DP}. In the
non-interactive model, the database is published after a \textit{sanitization}
process in which personal identifiers are eliminated and the data is perturbed
using one of many possible input perturbation approaches; alternately in the
interactive model, the database adds noise to the response based on a data model.

In order to quantify the privacy and utility afforded by a data source,
metrics are critical. The concept of $k$-\textit{anonymity} proposed by
Sweeney \cite{Sweeney} captures the intuitive notion of privacy that every
individual entry should be indistinguishable from $(k-1)$ other entries for
some large value of $k$. More recently, researchers in the data mining
community have proposed to quantify the privacy loss resulting from data
disclosure as the mutual information between attribute values in the original
and perturbed data sets, both modeled as random variables \cite{Ag_Ag}.
Finally, motivated by cryptographic models, the concept of \emph{differential
privacy} from theoretical computer science \cite{Dwork_DP,Dwork_DP_Survey} has
created a universal model for privacy which measures the risk of loss of
privacy to an individual whose data is in a statistical database. However,
this work as well the others described above do not propose a companion
universal utility metric that can be guaranteed along with privacy.

\subsection{Privacy vs. Utility}

While the privacy problem has been studied by multiple communities using
multiple approaches, the companion utility problem has not been studied as
analytically and exhaustively except in the context of specific applications.
Indeed, most discussions of privacy assume an implicit utility that is left
unstated or unmeasured. Utility of a data source is, by necessity, a relative
concept and is measured from the point of view of the user: utility is maximal
when the user gets full information flow and reduces when the flow of certain
information is reduced either by restriction or the addition of noise. The
general concept of utility as a measure of the approximation to an underlying
(but undisclosed) quantity is a fertile area of research (e.g.:
\cite{Util3,Tuncel1}). However, these measures have not been customized for
the context of privacy enhancement. Heuristic measures of utility in the
context of privacy have been proposed (e.g.: \cite{Chawla01}) but they do not
yield a general notion of utility. In our proposed work, we will use a working
definition of utility as the measure of the \textit{distance} or
\textit{divergence} (using suitably chosen metrics such as Euclidean or
Kullback-Leibler divergence) between the original and sanitized databases.

\section{An Information-Theoretic Approach}

\subsection{\label{Sec_DB_Model}Model for Databases}

\textit{Circumventing the semantic issue}: In general, utility and privacy
metrics tend to be application specific. Focusing our efforts on developing an
analytical model, we propose to capture a canonical database model and
representative abstract metrics. Such a model will circumvent the classic
privacy issues related to the semantics of the data by assuming that there
exist forward and reverse maps of the data set to the proposed abstract format
(for e.g., a string of bits or a sequence of real values). Such mappings are
often implicitly assumed in the privacy literature
\cite{Chawla01,Ag_Ag,Dwork_DP}; our motivation for making it explicit is to
separate the semantic issues from the abstraction and apply Shannon-theoretic techniques.

\textit{Model}: Our proposed model focuses on large databases with $K$
attributes per entry. Let $X_{k}\in\mathcal{X}_{k}$ be a random variable
denoting the $k^{th}$ attribute, $k=1,2,\ldots,K,$ and let $\mathbf{X}%
\equiv\left(  X_{1},X_{2},\ldots,X_{K}\right)  $. A database $d$ with $n$ rows
is a sequence of $n$ independent observations of $\mathbf{X}$ with the
distribution%
\begin{equation}
p_{\mathbf{X}}\left(  \mathbf{x}\right)  =p_{X_{1}X_{2}\ldots X_{K}}\left(
x_{1},x_{2},\ldots,x_{K}\right)  \label{Prob_JointDist}%
\end{equation}
which is assumed to be known to the designers of the database. Our assumption
of row independence in (\ref{Prob_JointDist}) is justified because correlation
in databases is typically across attributes and not across entries. We write
$\mathbf{X}^{n}=\left(  X_{1}^{n},X_{2}^{n},\ldots,X_{K}^{n}\right)  $ to
denote the $n$ independent observations of $\mathbf{X}$. This database model
is universal in the sense that most practical databases can be mapped to this model.

A joint distribution in (\ref{Prob_JointDist}) models the fact that the
attributes in general are correlated and can reveal information about one
another. In addition to the revealed information, a user of a database can
have access to correlated side information from other information sources. We
model the side-information as an $n$-length sequence $Z^{n}$ which is
correlated with the database entries via a joint distribution $p_{\mathbf{X}%
Z}\left(  \mathbf{x,}z\right)  .$

\textit{Public and private variables}: We consider a general model in which
some attributes need to be kept private while the source can reveal a function
of some or all of the attributes. We write $\mathcal{K}_{r}$ and
$\mathcal{K}_{h}$ to denote sets of private (subscript $h$ for hidden) and
public (subscript $r$ for revealed) attributes, respectively, such that
$\mathcal{K}_{r}\cup\mathcal{K}_{h}=\mathcal{K\equiv}\left\{  1,2,\ldots
,K\right\}  $. We further denote the corresponding collections of public and
private attributes by $\mathbf{X}_{r}\equiv\left\{  X_{k}\right\}
_{k\in\mathcal{K}_{r}}$ and $\mathbf{X}_{h}\equiv\left\{  X_{k}\right\}
_{k\in\mathcal{K}_{h}}$, respectively. Our notation allows for an attribute to
be both public and private; this is to account for the fact that a database
may need to reveal a function of an attribute while keeping the attribute
itself private. In general, a database can choose to keep public (or private)
one or more attributes ($K>1)$. Irrespective of the number of private
attributes, a non-zero utility results only when the database reveals an
appropriate function of some or all of its attributes.

\textit{Special cases}: For $K=1$, the lone attribute of each entry (row) is
both public and private, and thus, we have $X\equiv X_{r}\equiv X_{h}$. Such a
model is appropriate for data mining \cite{Ag_Ag} and census
\cite{Dalenius,Chawla01} data sets in which utility generally is achieved by
revealing a function of every entry of the database while simultaneously
ensuring that no entry is completely revealed. For $K=2$ and $\mathcal{K}%
_{h}\cup\mathcal{K}_{r}=\mathcal{K}$ and $\mathcal{K}_{h}\cap\mathcal{K}%
_{r}=\emptyset,$ we obtain the Yamamoto model in \cite{Yamamoto}.

\subsection{Metrics: The Privacy and Utility Principle}

Even though utility and privacy measures tend to be specific to the
application, there is a fundamental principle that unifies all these measures
in the abstract domain. The aim of a privacy-preserving database is to provide
some measure of utility to the user while at the same time guaranteeing a
measure of privacy for the entries in the database.

A user perceives the utility of a perturbed database to be high as long as the
response is similar to the response of the original database; thus, the
utility is highest of an original (unpertubed) database and goes to zero when
the perturbed database is completely unrelated to the original database.
Accordingly, our utility metric is an appropriately chosen average `distance'
function between the original and the perturbed databases. Privacy, on the
other hand, is maximized when the perturbed response is completely independent
of the data. Our privacy metric measures the difficulty of extracting any
private information from the response, i.e., the amount of uncertainty or
\textit{equivocation }about the private attributes given the response.

\subsection{\label{SS2}A Privacy-Utility Tradeoff Model}

We now propose a privacy-utility model for databases. \textit{Our primary
contribution is demonstrating the equivalence between the database privacy
problem and a source coding problem with additional privacy constraints}. For
our abstract universal database model, sanitization is thus a problem of
mapping a set of database entries to a different set subject to specific
utility and privacy requirements. Our notation below relies on this abstraction.\ 

Recall that a database $d$ with $n$ rows is an instantiation of $\mathbf{X}%
^{n}$. Thus, we will henceforth refer to a real database $d$ as an
\textit{input sequence} and to the corresponding sanitized database (SDB)
$d^{\prime}$ as an \textit{output sequence}. When the user has access to side
information, the \textit{reconstructed sequence} at the user will in general
be different from the SDB sequence.

Our coding scheme consists of an encoder $F_{E}$ which is a mapping from the
set of all input sequences (i.e., all databases $d$ picked from an underlying
distribution$)$ to a set of indices~$\mathcal{W}\equiv\left\{  1,2,\ldots
,M\right\}  $ and an associated table of output sequences (each of which is a
$d^{\prime})$ with a one-to-one mapping to the set of indices given by%
\begin{equation}
F_{E}:\left(  \mathcal{X}_{1}^{n}\times\mathcal{X}_{2}^{n}\times\ldots
\times\mathcal{X}_{k}^{n}\right)  _{k\in\mathcal{K}_{enc}}\rightarrow
\mathcal{W}\equiv\left\{  SDB_{k}\right\}  _{k=1}^{M} \label{F_Enc}%
\end{equation}
where $\mathcal{K}_{r}\subseteq\mathcal{K}_{enc}\subseteq\mathcal{K}$ and
$M=2^{nR}$ is the number of output (sanitized) sequences created from the set
of all input sequences. The encoding rate $R$ is the number of bits per entry
(without loss of generality, we assume $n$ entries in $d$ and $d^{\prime}$) of
the sanitized database. The encoding $F_{E}$ in (\ref{F_Enc}) includes both
public and private attributes in order to model the general case in which the
sanitization depends on a subset of all attributes.

A user with a view of the SDB (i.e., an index $w\in\mathcal{W}$ for every $d)$
and with access to side information $Z^{n}$, whose entries $Z_{i}$,
$i=1,2,\ldots,n,$ take values in the alphabet $\mathcal{Z}$, reconstructs the
database $d^{\prime}$ via the mapping%
\begin{equation}
F_{D}:\mathcal{W}\times\mathcal{Z}^{n}\rightarrow\left\{  \mathbf{\hat{x}%
}_{r,m}^{n}\right\}  _{m=1}^{M}\in\left(
{\textstyle\prod\nolimits_{k\in\mathcal{K}_{r}}}
\mathcal{\hat{X}}_{k}^{n}\right)  \label{F_Dec}%
\end{equation}
where $\mathbf{\hat{X}}_{r}^{n}=F_{D}\left(  F_{E}\left(  \mathbf{X}%
^{n}\right)  \right)  $.

A database may need to satisfy multiple utility constraints for different
(disjoint) subsets of attributes, and thus, we consider a general framework
with $L\geq1$ utility functions that need to be satisfied. Relying on the
distance based utility principle, we model the $l^{th}$ utility,
$l=1,2,\ldots,L,$ via the requirement that the average \textit{distortion}
$\Delta_{l}$ of a function $f_{l}$ of the revealed variables is upper bounded,
for some $\epsilon>0$, as%
\begin{multline}
u_{l}:\Delta_{l}\equiv\mathbb{E}\left[  \frac{1}{n}%
{\textstyle\sum_{i=1}^{n}}
g\left(  f_{l}\left(  \mathbf{X}_{r,i}\right)  ,f_{l}\left(  \mathbf{\hat{X}%
}_{r,i}\right)  \right)  \right]  \leq D_{l}+\epsilon\text{, }%
\label{Utility_mod}\\
l=1,2,\ldots,L,
\end{multline}
where $g\left(  \cdot,\cdot\right)  $ denotes a distortion function,
$\mathbb{E}$ is the expectation over the joint distribution of $(\mathbf{X}%
_{r},\mathbf{\hat{X}}_{r})$, and the subscript $i$ in $\mathbf{X}_{r,i}$ and
$\mathbf{\hat{X}}_{r,i}$ denotes the $i^{th}$ entry of $\mathbf{X}_{r}^{n}$
and $\mathbf{\hat{X}}_{r}^{n}$, respectively. Examples of distance-based
distortion functions include the Euclidean distance for Gaussian distributed
database entries, the Hamming distance for binary input and output sequences,
and the Kullback-Leibler (K-L) `distance' comparing the input and output distributions.

Having argued that a quantifiable uncertainty captures the privacy of a
database, we model the uncertainty or equivocation about the private variables
using the entropy function as
\begin{equation}
p:\Delta_{p}\equiv\frac{1}{n}H\left(  \mathbf{X}_{h}^{n}|W,Z^{n}\right)  \geq
E-\epsilon, \label{Equivoc}%
\end{equation}
i.e., we require the average number of uncertain bits per dimension to be
lower bounded by $E$. The case in which side information is not available at
the user is obtained by simply setting $Z^{n}=0$ in (\ref{F_Dec}) and
(\ref{Equivoc}). While our general problem allows separate constraints on the
privacy and utility, we show later that for specific canonical databases
(census and data mining) \textit{a constraint on only one} of them (utility or
privacy) suffices (see Corollary \ref{Corr_1} in Section \ref{Sec_SI}).

The utility and privacy metrics in (\ref{Utility_mod}) and (\ref{Equivoc}),
respectively, capture two aspects of our universal model: a) both represent
averages by computing the metrics across all database instantiations $d$, and
b) the metrics bound the average distortion and privacy per entry. Thus, as
the likelihood of the non-typical sequences decreases exponentially with
increasing $n$ (very large databases), these guarantees apply nearly uniformly
to all (typical)\ entries. Our general model also encompasses the fact that
the exact mapping from the distortion and equivocation domains to the utility
and privacy domains, respectively, can depend on the application domain. We
write $D\equiv(D_{1},D_{2},\ldots,$ $D_{L})$ and $\Delta\equiv(\Delta
_{1},\Delta_{2},\ldots,\Delta_{L})$. Based on our notation thus far, we define
the utility-privacy tradeoff region as follows.

\begin{definition}
The utility-privacy tradeoff region $\mathcal{T}$ is the set of all feasible
utility-privacy tuples $(D,E)$ for which there exists a coding scheme $\left(
F_{E},F_{D}\right)  $ given by (\ref{F_Enc}) and (\ref{F_Dec}), respectively,
with parameters $(n,M,\Delta,\Delta_{p})$ satisfying the constraints in
(\ref{Utility_mod}) and (\ref{Equivoc}).
\end{definition}

\subsection{\label{Sec_RDE}Equivalence of Utility-Privacy and
Rate-Distortion-Equivocation}

We now present an argument for the equivalence of the above utility-privacy
tradeoff analysis with a rate-distortion-equivocation analysis of the same
source. For the database source model described here, a classic lossy source
coding problem is defined as follows.

\begin{definition}
The set of tuples $(R,D)$ is said to be feasible (achievable) if there exists
a coding scheme given by (\ref{F_Enc}) and (\ref{F_Dec}) with parameters
$(n,M,\Delta)$ satisfying the constraints in (\ref{Utility_mod}) and a rate
constraint
\begin{equation}
M\leq2^{n\left(  R+\epsilon\right)  }. \label{Rate_constraint}%
\end{equation}

\end{definition}

When an additional privacy constraint in (\ref{Equivoc}) is included, the
source coding problem becomes one of determining the achievable
rate-distortion-equivocation region defined as follows.

\begin{definition}
\label{Def_RDE}The rate-distortion-equivocation region $\mathcal{R}$ is the
set of all tuples $(R,D,E)$ for which there exists a coding scheme given by
(\ref{F_Enc}) and (\ref{F_Dec}) with parameters $(n,M,\Delta,\Delta_{p})$
satisfying the constraints in (\ref{Utility_mod}), (\ref{Equivoc}), and
(\ref{Rate_constraint}). The set of all feasible distortion-equivocation
tuples $\left(  D,E\right)  $ is denoted by $\mathcal{R}_{D-E}$, the
equivocation-distortion function in the $D$-$E$ plane is denoted by
$\Gamma(D)$, and the distortion-equivocation function which quantifies the
rate as a function of both $D$ and $E$ is denoted by $R\left(  D,E\right)  $.
\end{definition}

Thus, a rate-distortion-equivocation code is by definition a (lossy) source
code satisfying a set of distortion constraints that achieves a specific
privacy level for every choice of the distortion tuple. In the following
theorem, we present a basic result capturing the precise relationship between
$\mathcal{T}$ and $\mathcal{R}$. To the best of our knowledge, this is the
first analytical result that quantifies a tight relationship between utility
and privacy. We briefly sketch the proof here; details can be found in
\cite{LS_VP}.

\begin{theorem}
\label{Lemma_equiv}For a database with a set of utility and privacy metrics,
the tightest utility-privacy tradeoff region $\mathcal{T}$ is the
distortion-equivocation region $\mathcal{R}_{D-E}$.
\end{theorem}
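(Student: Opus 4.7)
The plan is to show the set equality $\mathcal{T}=\mathcal{R}_{D-E}$ by mutual inclusion, exploiting the fact that both regions are built from the \emph{same} class of encoder--decoder pairs $(F_{E},F_{D})$ in (\ref{F_Enc}) and (\ref{F_Dec}) and subject to the \emph{same} utility and privacy constraints in (\ref{Utility_mod}) and (\ref{Equivoc}); the only syntactic difference between the two definitions is the explicit rate constraint (\ref{Rate_constraint}) that participates in $\mathcal{R}$ but not in $\mathcal{T}$.

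First I would argue $\mathcal{R}_{D-E}\subseteq\mathcal{T}$ as a projection. By Definition \ref{Def_RDE}, if $(D,E)\in\mathcal{R}_{D-E}$ then there exists some rate $R$ with $(R,D,E)\in\mathcal{R}$, witnessed by a coding scheme with parameters $(n,M,\Delta,\Delta_{p})$ that simultaneously satisfies (\ref{Utility_mod}), (\ref{Equivoc}), and (\ref{Rate_constraint}). Dropping the rate constraint leaves a scheme that still satisfies the utility and privacy constraints, which certifies $(D,E)\in\mathcal{T}$.

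Next I would establish the reverse inclusion $\mathcal{T}\subseteq\mathcal{R}_{D-E}$ by a trivial rate assignment. For any $(D,E)\in\mathcal{T}$, let $(F_{E},F_{D})$ with parameters $(n,M,\Delta,\Delta_{p})$ be the scheme witnessing membership. Choosing $R=\tfrac{1}{n}\log M$ makes (\ref{Rate_constraint}) hold automatically; hence $(R,D,E)\in\mathcal{R}$ and therefore $(D,E)\in\mathcal{R}_{D-E}$. The two inclusions yield $\mathcal{T}=\mathcal{R}_{D-E}$.

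Finally I would explain why this equality merits the adjective ``tightest.'' Although $\mathcal{T}$ imposes no explicit budget on $R$, standard monotonicity of equivocation under coarser descriptions implies that, for any fixed distortion level $D$, the maximal achievable equivocation is attained by the minimum-rate code meeting that distortion --- any additional disclosed bits about $\mathbf{X}^{n}$ can only further reduce $H(\mathbf{X}_{h}^{n}|W,Z^{n})$. Consequently the Pareto-optimal utility--privacy boundary of $\mathcal{T}$ is traced by the minimum-rate contour of $\mathcal{R}$, and the equivocation--distortion function $\Gamma(D)$ characterizes the best possible privacy at utility $D$. I do not foresee a genuine technical obstacle here: the argument is essentially structural. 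The only subtlety worth spelling out carefully is verifying that the same $\epsilon$-slack that appears in (\ref{Utility_mod}), (\ref{Equivoc}), and (\ref{Rate_constraint}) is compatible across the two definitions, so that asymptotic achievability (as $n\to\infty$, $\epsilon\to 0$) in one region transfers to the other without loss.
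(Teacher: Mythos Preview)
Your proof is correct. The paper's own argument is a brief conceptual sketch that goes straight to the key operational insight: for any feasible utility level $D$, choosing the minimum rate $R(D)$ reveals the least information about the source and hence maximizes the equivocation of the private attributes, so the privacy achievable in $\mathcal{T}$ cannot exceed that in $\mathcal{R}_{D-E}$. Your mutual-inclusion argument is more elementary and structural: you observe that the two definitions share the same class of coding schemes and the same constraints (\ref{Utility_mod}) and (\ref{Equivoc}), differing only in whether a rate is tracked, and that any finite scheme trivially satisfies (\ref{Rate_constraint}) once one sets $R=\tfrac{1}{n}\log M$. This buys you a clean formal proof of the set equality $\mathcal{T}=\mathcal{R}_{D-E}$ without needing the monotonicity-of-equivocation-in-rate argument at all; your third paragraph then recovers precisely the paper's insight as the explanation of the adjective ``tightest,'' i.e., why the Pareto boundary of this common region is traced by $\Gamma(D)$ along the minimum-rate contour. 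The paper's route, by contrast, foregrounds the information-theoretic interpretation (minimum disclosure rate implies maximum privacy) but leaves the set equality itself implicit.
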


\begin{proof}
The crux of our argument is the fact that for any feasible utility level $D$,
choosing the minimum rate $R\left(  D\right)  $, ensures that the least amount
of \textit{information} is revealed about the source via the reconstructed
variables. This in turn ensures that the maximum privacy of the private
attributes is achieved for that utility since, in general, the public and
private variables are correlated. For the same set of utility constraints,
since such a rate requirement is not a part of the utility-privacy model, the
resulting privacy achieved is at most as large as that in $\mathcal{R}_{D-E}$
(see Fig. \ref{Fig_RDE_UP}(a)).
\end{proof}

Implicit in the above argument is the fact that a utility-privacy achieving
code does not perform any better than a rate-distortion-equivocation code in
terms of achieving a lower rate (given by $\log_{2}M/n)$ for the same
distortion and privacy constraints. This is because if such a code exists then
we can always find an equivalent source coding problem for which the code
would violate Shannon's source coding theorem \cite{Shannon_SC}. An immediate
consequence of this is that a distortion-constrained source code suffices to
preserve a desired level of privacy; in other words, \textit{the utility
constraints require revealing data which in turn comes at a certain privacy
cost that must be borne and vice-versa}. We capture this observation in Fig.
\ref{Fig_RDE_UP}(b) where we contrast existing privacy-exclusive and
utility-exclusive regimes (extreme points of the utility-privacy tradeoff
curve) with our more general approach of determining the set of feasible
utility-privacy tradeoff points.%

\begin{figure*}[tbp] \centering
{\includegraphics[
height=2.8219in,
width=5.4794in
]%
{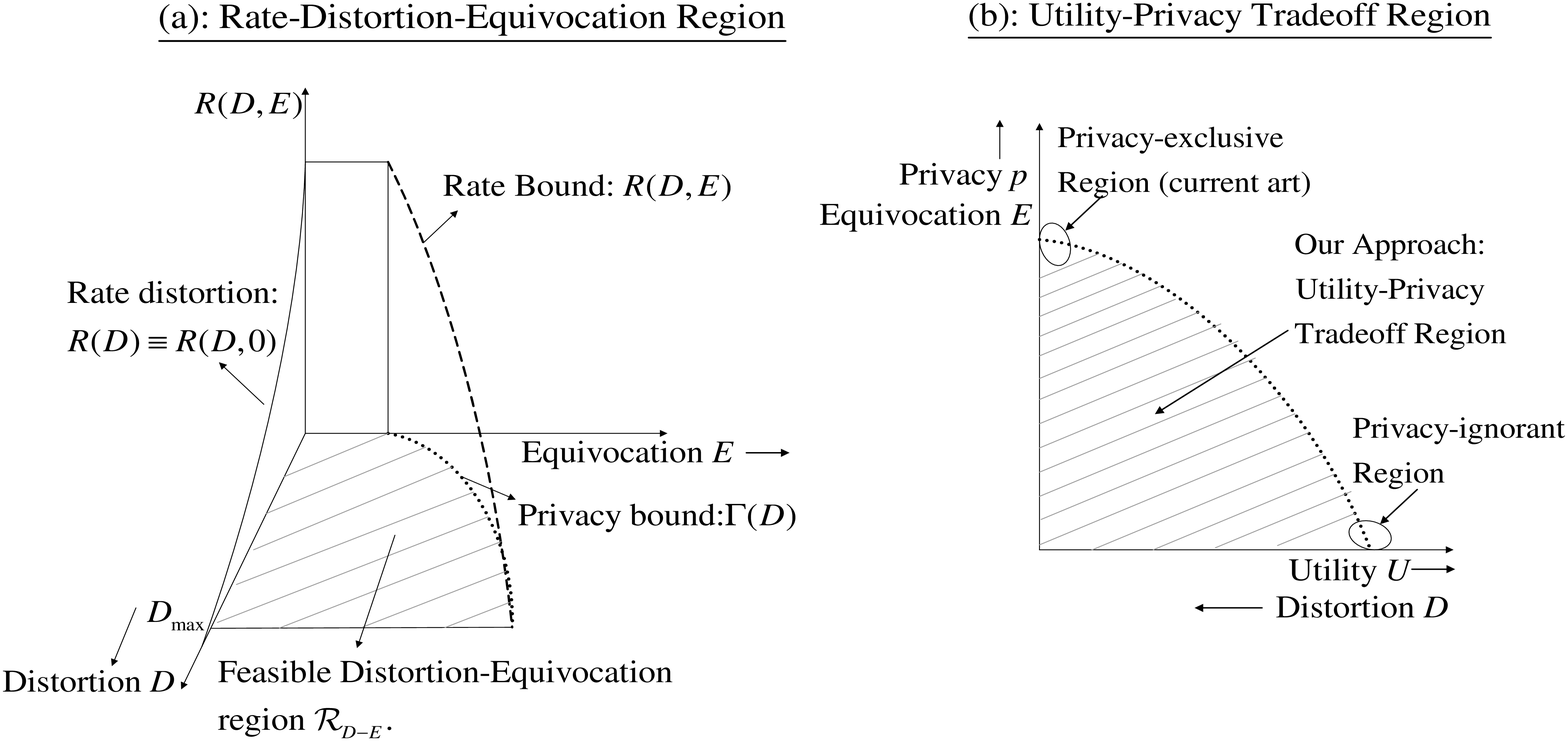}%
}%
\caption{(a) Rate Distortion Equivocation Region; (b) Utility-Privacy Tradeoff Region.}\label{Fig_RDE_UP}%
\end{figure*}%

From an information-theoretic perspective, the power of Theorem
\ref{Lemma_equiv} is that it allows us to study the larger problem of database
utility-privacy tradeoffs in terms of a relatively familiar problem of source
coding with privacy constraints. As noted previously, this problem has been
studied for a specific source model by Yamamoto and here we expand his elegant
analysis to arbitrary database models including those with side information at
the user.

\subsection{\label{Sec_SI}Capturing the Effects of Side-Information}

It has been illustrated that when a user has access to an external data source
(which is not part of the database under consideration) the level of privacy
that can be guaranteed changes \cite{Sweeney,Dwork_DP}. We cast this problem
in information-theoretic terms as a side information problem.

In an extended version of this work \cite{LS_VP}, we have developed the
tightest utility-privacy tradeoff region for the three cases of a) no side
information ($L=1$ case studied in \cite{Yamamoto}), b) side information only
at the user, and c) side information at both the source (database) and the
user. We present a result for the case with side information at the user only
and for simplicity, we assume a single utility function, i.e., $L=1$. The
proof mimics that of source coding with side information in \cite{Wyner_Ziv}
and therefore, involves the use of an auxiliary random variable $U$. The proof
also includes bounds on the equivocation along the lines of those in
\cite[Appendix 1]{Yamamoto}. The following theorem defines the bounds on the
region $\mathcal{R}$ in Definition \ref{Def_RDE} via the functions $\Gamma(D)$
and $R(D,E)$ where $\Gamma(D)$ bounds the maximal achievable privacy and
$R(D,E)$ is the minimal information rate (see Fig. \ref{Fig_RDE_UP}(a)) for
very large databases $\left(  n\rightarrow\infty\right)  $. The proof is
omitted due to space and can be found in \cite{LS_VP}.

\begin{theorem}
\label{Pro_Prop1}For a database with side information available only at the
user, the functions $\Gamma(D)$ and $R\left(  D,E\right)  $ and the regions
$\mathcal{R}_{D-E}$ and $\mathcal{R}$ are given by%
\begin{align}
\Gamma\left(  D\right)   &  =\sup_{p\left(  \mathbf{x}_{r},\mathbf{x}%
_{h}\right)  p\left(  u|\mathbf{x}_{r},\mathbf{x}_{h}\right)  \in
\mathcal{P}\left(  D\right)  }H(\mathbf{X}_{h}|UZ)\label{TD_SIu}\\
R\left(  D,E\right)   &  =\inf_{p\left(  \mathbf{x}_{r},\mathbf{x}_{h}\right)
p\left(  u|\mathbf{x}_{r},\mathbf{x}_{h}\right)  \in\mathcal{P}\left(
D,E\right)  }I(\mathbf{X}_{h}\mathbf{X}_{r};U)-I(Z;U) \label{RDE_SIu}%
\end{align}%
\begin{equation}
\mathcal{R}_{D-E}=\left\{  \left(  D,E\right)  :D\geq0,0\leq E\leq
\Gamma\left(  D\right)  \right\}  \label{RDEreg_SIu}%
\end{equation}%
\begin{equation}
\mathcal{R}=\left\{  \left(  R,D,E\right)  :D\geq0,0\leq E\leq\Gamma\left(
D\right)  ,R\geq R\left(  D,E\right)  \right\}  \label{Rreg_SIu}%
\end{equation}
where $\mathcal{P}\left(  D,E\right)  $ is the set of all $p(\mathbf{x}%
_{r},\mathbf{x}_{h},z)p(u|\mathbf{x}_{r},\mathbf{x}_{h})$ such that
$\mathbb{E}\left[  d\left(  \mathbf{X}_{r},g\left(  U,Z\right)  \right)
\right]  \leq D$ and $H(\mathbf{X}_{h}|UZ)\geq E$ while $\mathcal{P}\left(
D\right)  $ is defined as%
\begin{equation}
\mathcal{P}\left(  D\right)  \equiv%
{\textstyle\bigcup_{H(\mathbf{X}_{h}|\mathbf{X}_{r}Z)\leq E\leq H(\mathbf{X}%
_{h}|Z)}}
\mathcal{P}\left(  D,E\right)  .
\end{equation}

\end{theorem}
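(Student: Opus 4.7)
The plan is to establish the region by a matching achievability and converse argument, treating the problem as a Wyner-Ziv lossy source coding setup (side information at the decoder) overlaid with an equivocation constraint on the private attributes in the style of \cite{Yamamoto}. Once $\mathcal{R}$ in (\ref{Rreg_SIu}) is established, the distortion-equivocation region $\mathcal{R}_{D-E}$ in (\ref{RDEreg_SIu}) and the envelope $\Gamma(D)$ follow by projection, and the union description of $\mathcal{P}(D)$ accounts for the supremum in (\ref{TD_SIu}).

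For achievability, I fix a test channel $p(u\mid\mathbf{x}_r,\mathbf{x}_h)\in\mathcal{P}(D,E)$ and a reconstruction $g(U,Z)$. Following Wyner-Ziv, I generate $2^{n(I(\mathbf{X}_h\mathbf{X}_r;U)+\epsilon)}$ i.i.d.\ $U^n$ codewords under $p(u)$ and distribute them uniformly into $M=2^{n(I(\mathbf{X}_h\mathbf{X}_r;U)-I(Z;U)+2\epsilon)}$ bins. The encoder maps $(\mathbf{X}_r^n,\mathbf{X}_h^n)$ to the bin index $W$ of a jointly typical $U^n$, and the decoder recovers $U^n$ by joint typicality with $Z^n$ inside that bin, then emits $\hat{X}_{r,i}=g(U_i,Z_i)$. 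Standard joint-typicality arguments deliver the rate (\ref{Rate_constraint}) and distortion (\ref{Utility_mod}) constraints. For the privacy side of achievability, I would mimic \cite[Appendix 1]{Yamamoto}: write
\begin{equation}
H(\mathbf{X}_h^n\mid W,Z^n)\geq H(\mathbf{X}_h^n\mid W,Z^n,U^n),
\end{equation}
then use the Markov chain $W\to(\mathbf{X}_r^n,\mathbf{X}_h^n)\to U^n$, the i.i.d.\ structure, and the fact that $(U^n,Z^n)$ approximately localizes $W$, to single-letterize the right-hand side down to $nH(\mathbf{X}_h\mid U,Z)-o(n)$, yielding $E\leq\Gamma(D)$.

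For the converse, given any $(n,M,\Delta,\Delta_p)$ code, I use the classical Wyner-Ziv single-letterization: the identification $U_i\equiv(W,Z^{i-1},Z_{i+1}^n)$ furnishes the Markov chain $U_i\to(\mathbf{X}_{r,i},\mathbf{X}_{h,i})\to Z_i$ and
\begin{equation}
nR\geq H(W)\geq I(\mathbf{X}_r^n\mathbf{X}_h^n;W\mid Z^n)\geq\sum_{i=1}^n\bigl[I(\mathbf{X}_{r,i}\mathbf{X}_{h,i};U_i)-I(Z_i;U_i)\bigr],
\end{equation}
while the distortion constraint passes through convexity in the standard way. For the equivocation I expand
\begin{equation}
n\Delta_p=H(\mathbf{X}_h^n\mid W,Z^n)=\sum_{i=1}^n H(X_{h,i}\mid W,Z^n,\mathbf{X}_h^{i-1})\leq\sum_{i=1}^n H(X_{h,i}\mid U_i,Z_i),
\end{equation}
and introduce a uniform time-sharing variable absorbed into $U$; single-letterization then produces (\ref{TD_SIu}) and (\ref{RDE_SIu}).

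The main obstacle is to ensure that a \emph{single} auxiliary $U$ simultaneously supports the rate converse (which demands $U\to(\mathbf{X}_r,\mathbf{X}_h)\to Z$) and the equivocation converse (which needs the conditioning in $U_i$ rich enough for $H(X_{h,i}\mid W,Z^n,\mathbf{X}_h^{i-1})\leq H(X_{h,i}\mid U_i,Z_i)$ to hold tightly). Yamamoto resolved this tension in the no-side-information case by a careful choice of auxiliary; the Wyner-Ziv case requires augmenting that identification with the $(Z^{i-1},Z_{i+1}^n)$ block while preserving the Markov structure and matching the achievability test channel. Getting this compatibility right — and then checking that the minimization in (\ref{RDE_SIu}) is attained by a distribution that is also feasible for (\ref{TD_SIu}), so that the union $\mathcal{P}(D)=\bigcup_{E}\mathcal{P}(D,E)$ really does trace out $\Gamma(D)$ — is the genuinely technical step; the remainder is routine Wyner-Ziv/Yamamoto bookkeeping.
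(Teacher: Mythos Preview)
Your approach is correct and matches the paper's own: the paper explicitly states that its proof ``mimics that of source coding with side information in \cite{Wyner_Ziv}'' via an auxiliary $U$ and ``includes bounds on the equivocation along the lines of those in \cite[Appendix~1]{Yamamoto},'' which is exactly the Wyner--Ziv binning for achievability and the identification $U_i=(W,Z^{i-1},Z_{i+1}^n)$ for the converse that you outline. One small cleanup: in the achievability equivocation step you do not need the Markov chain $W\to(\mathbf{X}_r^n,\mathbf{X}_h^n)\to U^n$ (which does not hold as stated); rather, since $W$ is a deterministic function of $U^n$ (its bin index), $H(\mathbf{X}_h^n\mid W,Z^n,U^n)=H(\mathbf{X}_h^n\mid Z^n,U^n)$ directly, and the single-letterization then follows from the Markov lemma/approximate i.i.d.\ structure.
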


While Theorem \ref{Pro_Prop1} applies to a variety of database models, it is
extremely useful in quantifying the utility-privacy tradeoff for the following
special cases of interest.

i) \textit{The single database problem} (i.e., no side information\textit{)}:
\textit{SDB\ is revealed}. Here, we have $Z=0$ and $U=\hat{X}_{r}$, i.e., the
reconstructed vectors seen by the user are the same as the SDB vectors.

ii) \textit{Completely hidden private variables}: \textit{Privacy is
completely a function of the statistical relationship between public, private,
and side information data}. The expression for $R(D,E)$ in (\ref{RDE_SIu})
assumes the most general model of encoding both the private and the public
variables. When the private variables can only be deduced from the revealed
variables, i.e., $\mathbf{X}_{h}-\mathbf{X}_{r}-U$ is a Markov chain, the
expression for $R(D,E)$ in (\ref{RDE_SIu}) will simplify to the Wyner-Ziv
source coding formulation \cite{Wyner_Ziv}, thus clearly demonstrating that
the privacy of the hidden variables is a function of both the correlation
between the hidden and revealed variables and the distortion constraint.

iii)\ \textit{Census and data mining problems without side information}:
\textit{Information rate completely determines privacy achievable}. For $Z=0$,
setting $\mathbf{X}_{r}=\mathbf{X}_{h}\equiv X$ (such that $U=\hat{X}$), we
obtain the census/data mining problem discussed earlier. With this
substitution, from Theorem \ref{Pro_Prop1}, we have the maximal achievable
equivocation $\Gamma(D)=H(X)-R(D),$ where now $R(D)\equiv R(D,E)$. Our
analysis formalizes the intuition in \cite{Ag_Ag} for using the mutual
information as an estimate of the privacy lost. However in contrast to
\cite{Ag_Ag} in which the underlying perturbation model is an additive noise
model, we assume a perturbation model most appropriate for the input
statistics, i.e., the stochastic relationship between the output and input
variables is chosen to minimize the rate of information transfer. This
fundamental result is captured in the following corollary.

\begin{corollary}
\label{Corr_1}For the special case of $K=1$, i.e., $\mathbf{X}_{r}%
=\mathbf{X}_{h}\equiv X$, the utility-privacy problem is completely defined by
a utility constraint since the maximum achievable equivocation is directly
obtainable from the minimal information transfer rate.
\end{corollary}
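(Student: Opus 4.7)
The plan is to derive the corollary as a direct specialization of Theorem \ref{Pro_Prop1} under the substitutions dictated by the $K=1$, no-side-information setup, and then observe that the optimizations defining $\Gamma(D)$ and $R(D,E)$ collapse onto the standard rate--distortion optimization. First I would set $Z=0$, $\mathbf{X}_h = \mathbf{X}_r \equiv X$, and $U = \hat{X}$ in the formulas (\ref{TD_SIu}) and (\ref{RDE_SIu}). With these substitutions, the objective inside the infimum becomes $I(\mathbf{X}_h \mathbf{X}_r; U) - I(Z;U) = I(X;\hat{X})$, while the equivocation objective becomes $H(\mathbf{X}_h \mid UZ) = H(X \mid \hat{X})$. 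The feasible set $\mathcal{P}(D)$ reduces to the set of test channels $p(\hat{x}\mid x)$ satisfying $\mathbb{E}[g(X,\hat{X})] \leq D$, which is exactly the feasible set of the classical rate--distortion problem.

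Next I would apply the identity $H(X \mid \hat{X}) = H(X) - I(X;\hat{X})$. Because $H(X)$ is a constant determined only by the source, maximizing $H(X\mid \hat{X})$ over the feasible test channels is \emph{equivalent} to minimizing $I(X;\hat{X})$ over the same set. Therefore the supremum defining $\Gamma(D)$ in (\ref{TD_SIu}) and the infimum defining $R(D,E)$ in (\ref{RDE_SIu}) are achieved by the \emph{same} test channel, namely the rate--distortion optimal $p^{\ast}(\hat{x}\mid x)$ at distortion $D$. This yields
\begin{equation}
\Gamma(D) \;=\; H(X) - R(D),
\end{equation}
where $R(D) \equiv \inf_{p(\hat{x}\mid x):\,\mathbb{E}[g(X,\hat{X})]\leq D} I(X;\hat{X})$ is the ordinary Shannon rate--distortion function.

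Finally, I would translate this back into the language of the utility--privacy tradeoff. Since the feasible equivocation satisfies $0 \leq E \leq \Gamma(D) = H(X) - R(D)$, the privacy constraint adds no independent degree of freedom beyond the utility constraint: once $D$ is fixed, the maximum achievable $E$ is automatically determined, and conversely the minimal rate code that meets the distortion $D$ already delivers this maximal $E$. In particular, imposing an explicit equivocation constraint $E \leq \Gamma(D)$ on top of the distortion constraint is redundant, proving the corollary.

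The main obstacle I anticipate is the joint-optimality step: justifying that the channel minimizing $I(X;\hat{X})$ is the same channel maximizing $H(X\mid \hat{X})$ over the feasible set. Although it follows immediately from $H(X\mid \hat{X}) = H(X) - I(X;\hat{X})$ once $H(X)$ is recognized as a source constant independent of the test channel, care is needed to verify that this collapse is specific to the $\mathbf{X}_h = \mathbf{X}_r$ regime; in the general Yamamoto setting, $H(\mathbf{X}_h \mid U)$ does not decompose into a source-only term plus $-I(\mathbf{X}_r; U)$, which is precisely why separate $\Gamma$ and $R$ functions are needed there and why the collapse here is a genuine simplification rather than a general phenomenon.
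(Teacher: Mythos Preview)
Your proposal is correct and follows essentially the same argument as the paper: the paper substitutes $Z=0$, $\mathbf{X}_r=\mathbf{X}_h\equiv X$, $U=\hat{X}$ into Theorem~\ref{Pro_Prop1} and states directly that $\Gamma(D)=H(X)-R(D)$ with $R(D)\equiv R(D,E)$. Your write-up is simply a more explicit unpacking of that step, making the identity $H(X\mid\hat{X})=H(X)-I(X;\hat{X})$ and the resulting collapse of the two optimizations onto the same rate--distortion test channel explicit.
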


\subsection{A Successive Disclosure Problem}

As mentioned earlier, databases can be broadly categorized as non-interactive
and interactive depending on whether the data is sanitized once before
publishing or repeatedly in response to each query, respectively. For census
and similar statistical databases a one-shot sanitization is typical whereas
for more interactive databases multiple queries can lead to multiple sanitizations.

\textit{Single-query model}: The model and analysis proposed in Sections
\ref{Sec_DB_Model}-\ref{Sec_RDE} capture the non-interactive database model
and the resulting utility-privacy tradeoff region. For this one-shot model,
sanitization is determined by the choice of the utility and privacy metrics
defined \textit{a priori}. In contrast to existing approaches that are
dominantly focused on additive noise perturbations satisfying a large set of
queries \cite{DinNis,Dwo_NoiSen}, our one-shot approach is independent of
queries and is designed to satisfy specific utility and privacy constraints.
Such a model is relevant for databases such as those with medical and clinical
data that may find repeated uses in the future but with queries that cannot be
predicted ahead of time or which require query-independent strict sanitization
prior to interaction to ensure regulatory compliance (e.g., US\ HIPAA privacy
policies \cite{HIPAA}).

\textit{Multiple-query model}: For a large majority of data repositories,
utility is a function of their usage and as such the problem of addressing the
utility-privacy tradeoffs in a multiple query model is imperative. A
side-effect of allowing multiple queries is that a user can refine her query
to learn more information at each step, which in turn can lead to privacy
breaches. Our aim is to determine if a certain level of overall utility can be
guaranteed while preserving a desired overall privacy threshold. In the
absence of disclosure controls, a database will typically respond to each
query independently of the previous queries. We seek to develop a model in
which the database is cognizant of current and past queries in responding to
future queries. To this end, we assume the existence of a data collector that
provides an interface for the user to submit queries and collate the responses
over multiple queries, a common assumption in the multi-query literature
\cite{Dwork_DP,Dwo_NoiSen,Dwork_DP_Survey}. For this model, under the
assumption that the user wishes to obtain a refined view of the source, we
propose to \textit{determine whether a source can be successively disclosed},
i.e., whether a set of overall utility and privacy constraints can be
satisfied via multiple disclosures with increasing refinement at each stage
and without any information loss relative to an equivalent single-shot model
with the same overall utility and privacy constraints.

This problem of successive disclosure has a natural relationship to a problem
of \textit{successive refinement} in information theory, which pertains to
determining whether successively revealing data from a source with decreasing
distortion at each stage can ensure no rate loss relative to a one-shot
approach with the same final distortion \cite{CovEq,Rimoldi,Ahlswede}. We
demonstrate this analogy in Fig. \ref{Fig_SR_SD} where, at the first stage,
the user obtains a specific view (denoted $\hat{X}_{1}$ of a source $X)$ of
the source which in conjunction with the second stage provides a final refined
view $\hat{X}_{2}$. While the successive refinement problem is to determine
whether $R_{2}=R\left(  D_{2}\right)  $, the successive disclosure problem is
that of determining whether $R_{2}=R\left(  D_{2},E_{2}\right)  $ where
$D_{2}<D_{1}$ and $E_{2}<E_{1}$. As with the successive refinement problem,
our results can help determine the conditions and relationships between the
input and output sequences under which a source can be disclosed successively.%

\begin{figure*}[tbp] \centering
{\includegraphics[
height=2.3263in,
width=6.3105in
]%
{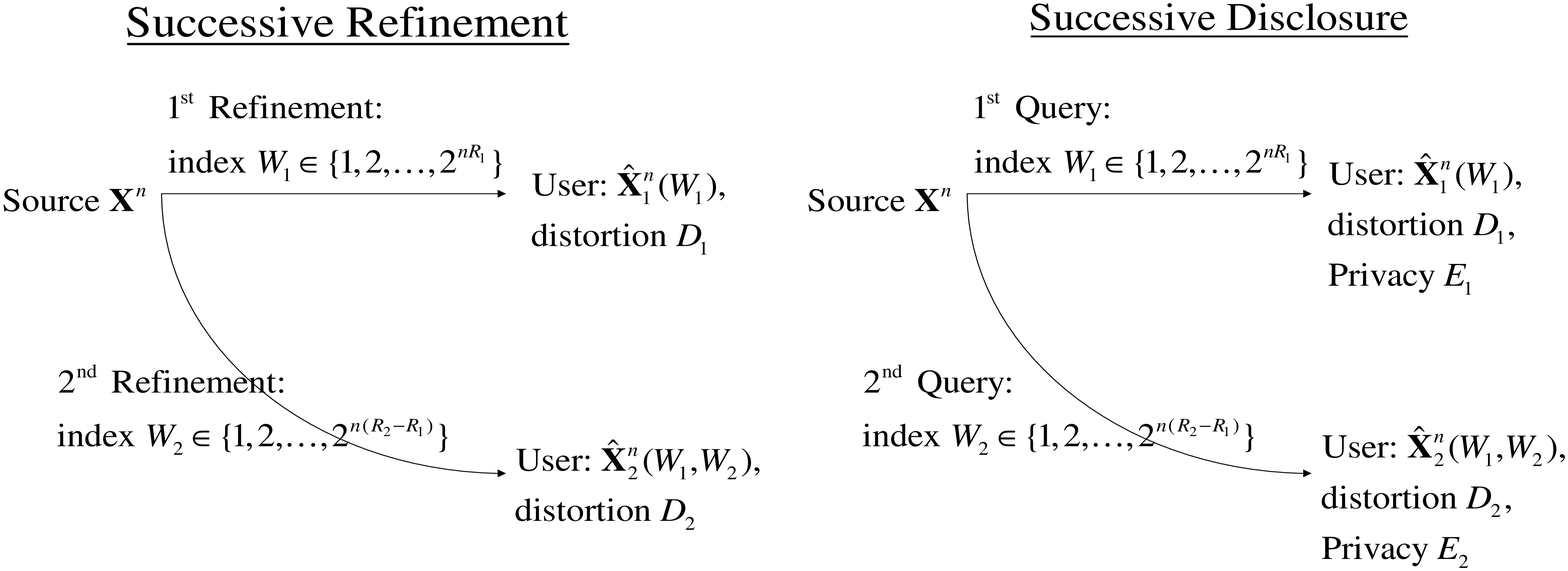}%
}%
\caption{Successive Refinement and Successive Disclosure Problems.}\label{Fig_SR_SD}%
\end{figure*}%

Analogous to successive refinement, we start by studying a \textit{multiple
disclosure} problem in which we seek to determine the rates $R_{0}$ and
$R_{1}$ at which the database responds with distortion (utility) and privacy
levels $\left(  D_{0},E_{0}\right)  $ and $\left(  D_{1},E_{1}\right)  $ to
two queries, respectively, such that a user using both query responses can
reconstruct a response at a distortion-privacy level of $\left(  D_{2}%
,E_{2}\right)  $. Analogous to the relationship between multiple description
and successive refinement, the successive disclosure problem described here is
a special case of the multiple disclosure problem for which there is no rate
loss, i.e., $R_{1}=R\left(  D_{1},E_{1}\right)  $ and $R_{0}+R_{1}=R\left(
D_{2},E_{2}\right)  $.

While a detailed analysis of this problem can be found in an extended version
of this work \cite{LS_VP}, we now present two example privacy problems for
which the successive refinement problem presents immediate insights on the
effects of refined disclosure. The two problems are privacy preservation in
census and data mining databases, and in both cases, we briefly argue that the
successive disclosure problem simplifies to the successive refinement problem.
Recall that in Corollary \ref{Corr_1}, we showed that the census and data
mining problems are special cases for which the rate-distortion-equivocation
region is directly obtainable from the rate-distortion curve because for both
problems the public and the private variables are the same as a result of
which the maximum achievable equivocation is directly obtainable from the
rate-distortion function. The following theorem summarizes our result.

\begin{theorem}
For $K=1$ databases, successive disclosure with distortion-privacy pairs
$\left(  D_{1},E_{1}\right)  $ and $\left(  D_{2},E_{2}\right)  $ are
achievable if and only if there exists a conditional distribution $p\left(
\hat{x}_{1},\hat{x}_{2}|x\right)  $ with%
\begin{equation}%
\begin{array}
[c]{cc}%
\mathbb{E}\left[  g\left(  X,\hat{X}_{k}\right)  \right]  \leq D_{k}, & k=1,2,
\end{array}
\end{equation}
such that%
\begin{equation}%
\begin{array}
[c]{cc}%
R\left(  D_{k},E_{k}\right)  =I(X;\hat{X}_{k}), & k=1,2,
\end{array}
\end{equation}
and $X-\hat{X}_{2}-\hat{X}_{1}$ form a Markov chain, i.e.,%
\begin{equation}
p\left(  \hat{x}_{1},\hat{x}_{2}|x\right)  =p\left(  \hat{x}_{2}|x\right)
p\left(  \hat{x}_{1}|\hat{x}_{2}\right)  .
\end{equation}

\end{theorem}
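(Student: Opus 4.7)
The plan is to reduce this statement to the classical successive refinement theorem of Equitz and Cover \cite{CovEq} by exploiting the fact that, for $K=1$ databases, privacy is completely determined by the information rate. Concretely, Corollary \ref{Corr_1} says that when $\mathbf{X}_r=\mathbf{X}_h\equiv X$, the maximum achievable equivocation at distortion $D$ is $\Gamma(D)=H(X)-R(D)$, where $R(D)$ is the ordinary rate-distortion function, and in particular $R(D,E)=R(D)$ whenever $E\le \Gamma(D)$. Consequently, fixing the rate equalities $R(D_k,E_k)=I(X;\hat X_k)$ is equivalent to asking that each stage $k$ operate on the rate-distortion curve at $D_k$, with the equivocation $E_k$ automatically pinned down by $H(X)-I(X;\hat X_k)$.

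For the \emph{achievability} (if) direction, I would assume we are given a joint distribution $p(\hat x_1,\hat x_2\mid x)$ with the Markov chain $X-\hat X_2-\hat X_1$, marginals meeting the distortion constraints $\mathbb{E}[g(X,\hat X_k)]\le D_k$, and with $I(X;\hat X_k)=R(D_k,E_k)$. I would then invoke the standard successive-refinement random coding construction: build a first-stage codebook of rate $R_0=I(X;\hat X_1)+\epsilon$ covering typical $x^n$ sequences by typical $\hat x_1^n$ sequences, and a second-stage refinement codebook of rate $R_1=I(X;\hat X_2\mid \hat X_1)+\epsilon$ conditioned on the first codeword, so that the combined rate is $I(X;\hat X_2)+2\epsilon$. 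Joint typicality with the Markov chain assumption guarantees both distortion constraints and the absence of rate loss at stage 2. The corresponding equivocations then satisfy $\frac{1}{n}H(X^n\mid W_0,W_{0,1})\ge H(X)-I(X;\hat X_k)-\delta_n=E_k-\delta_n$ by a standard Fano-type computation, as in \cite[Appendix 1]{Yamamoto}.

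For the \emph{converse} (only if) direction, I would begin with any achievable scheme with parameters $(R_0,D_1,E_1)$ and $(R_0+R_1,D_2,E_2)$ satisfying $R_0=R(D_1,E_1)$ and $R_0+R_1=R(D_2,E_2)$. Defining the reconstructions $\hat X_1^n=F_{D,1}(W_0)$ and $\hat X_2^n=F_{D,2}(W_0,W_{0,1})$, and using the single-letterization identity $nI(X;\hat X_k)\le nR_k$ from Shannon's lossy source coding converse together with convexity, one extracts a single-letter auxiliary pair $(\hat X_1,\hat X_2)$ with $\mathbb{E}[g(X,\hat X_k)]\le D_k$ and $I(X;\hat X_k)\le R(D_k,E_k)$ for $k=1,2$. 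The hypothesized no-rate-loss equalities $R_0=R(D_1,E_1)$ and $R_0+R_1=R(D_2,E_2)$ together with $I(X;\hat X_2)=I(X;\hat X_1)+I(X;\hat X_2\mid\hat X_1)$ force all inequalities to be tight, which in turn forces $I(X;\hat X_1\mid\hat X_2)=0$, i.e.\ the Markov chain $X-\hat X_2-\hat X_1$.

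The main obstacle I expect is the converse, specifically the careful bookkeeping that propagates the privacy (equivocation) equalities into the rate equalities and ultimately into vanishing conditional mutual information. The key leverage is Corollary \ref{Corr_1}: because for $K=1$ the equivocation is a deterministic decreasing function of the rate via $E=H(X)-R(D,E)$, the problem of matching both $(D_k,E_k)$ without rate loss is equivalent to the standard problem of matching both $D_k$ without rate loss, reducing everything to Equitz-Cover. The achievability direction is then a fairly routine adaptation of their coding scheme augmented with the Fano-style equivocation bound used elsewhere in the paper.
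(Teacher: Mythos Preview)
Your proposal is correct and matches the paper's approach: the paper explicitly reduces the $K=1$ successive disclosure problem to the Equitz--Cover successive refinement theorem via Corollary~\ref{Corr_1}, noting that since $\Gamma(D)=H(X)-R(D)$ the equivocation constraint is automatically handled once the rate-distortion equalities are met, so the Markov condition $X-\hat X_2-\hat X_1$ from \cite[Theorem~2]{CovEq} is both necessary and sufficient. Your achievability and converse sketches are exactly the standard Equitz--Cover arguments with the Yamamoto-style equivocation bookkeeping appended, which is what the paper intends (it defers the details to \cite{LS_VP}); one small slip is that your chain-rule identity should read $I(X;\hat X_1,\hat X_2)=I(X;\hat X_1)+I(X;\hat X_2\mid\hat X_1)$ rather than $I(X;\hat X_2)$ on the left, but the conclusion $I(X;\hat X_1\mid\hat X_2)=0$ follows correctly once you sandwich $R(D_2)\le I(X;\hat X_2)\le I(X;\hat X_1,\hat X_2)\le R_0+R_1=R(D_2)$.
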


Thus, for these two special but fundamentally important problems, we can show
that the Markov condition $X-\hat{X}_{2}-\hat{X}_{1}$ (see Fig.
\ref{Fig_SR_SD}) required for successive refinement \cite[Theorem 2]{CovEq}
also hold here and in fact suffices to satisfy the successive disclosure
requirement of no additional rate or privacy leakage. More work is needed to
address questions such as the practical implications of the above Markov
condition \cite{Rimoldi} and generalizing the solution to arbitrary sources.

\section{Concluding Remarks}

We have presented an abstract model for databases with an arbitrary number of
public and private variables, developed application-independent privacy and
utility metrics, used rate distortion theory to determine the fundamental
utility-privacy tradeoff limits, and introduced a successive disclosure
problem to study utility-privacy tradeoffs and determine the conditions for no
privacy loss for multiple query data sources. Future work includes
generalizing the results to distributed data sources and relating current
approaches in computer science and our universal approach.

\bibliographystyle{IEEEtran}
\bibliography{DB_refs}

\end{document}